\tikzstyle{every state}=[minimum size=12pt,inner sep=0pt]
\newtheorem{theorem}{Theorem}
\newtheorem{lemma}[theorem]{Lemma}
\newtheorem{proposition}[theorem]{Proposition}
\newtheorem{remark}[theorem]{Remark}
\newcommand{\FR}{\(\mathtt{FR}\)\xspace}
\newcommand{\SK}{\(\mathtt{automgrp}\)\xspace}
\newcommand{\GAP}{\(\mathtt{GAP}\)\xspace}
\newcommand{\mz}{\mathfrak m}
\newcommand{\aut}[1]{{\mathcal #1}}
\newcommand{\dual}[1]{{\mathfrak d}({#1})}
\newcommand{\mot}[1]{{\mathbf {#1}}}
\newcommand{\pres}[1]{\langle{#1}\rangle}
\newcommand{\presm}[1]{\pres{{#1}}_{+}}
\newcommand{\ie}{\emph{i.e.}\xspace}
\newcommand{\wrt}{w.r.t. }
\begin{document}

\title{On level-transitivity and exponential growth\thanks{This work was partially supported by the French \emph{Agence
    Nationale pour la~Recherche}, through the Project {\bf MealyM}
  ANR-JS02-012-01.}}

\author{Ines Klimann\\
Univ Paris Diderot, Sorbonne Paris Cit\'e, IRIF,\\ UMR 8243 CNRS, F-75013 Paris, France\\
{\small \url{klimann@liafa.univ-paris-diderot.fr}}}%
\date{}

\setlength{\parindent}{0pt}

\maketitle

\begin{abstract}
We prove that if the group generated by an invertible and reversible Mealy automaton acts
level-transitively on a regular rooted tree, then the semigroup
generated by the dual automaton has exponential growth, hence giving a
decision procedure of exponential growth for a restricted family of
automaton (semi)groups.
\end{abstract}

The purpose of this note is to link up two classes of
groups and semigroups highly studied for themselves: level-transitive
(semi)groups and (semi)groups of exponential growth, through automaton
(semi)groups.

On the one hand, level-transitive groups (or equivalently spherically transitive
groups, depending on the authors) --- \ie groups acting transitively on every level
of a regular rooted tree --- have received special focus these last years because of
branch groups, which form a particular class of level-transitive groups,
one of the three classes into which the
class of just infinite groups is naturally decomposed
~\cite{Grigorchuk2000,bgs03}.

On the other hand, the study on how (semi)groups grow has been highlighted since Milnor's
question on the existence of groups of
intermediate growth in 1968~\cite{milnor} and the very first example of such a
group given by Grigorchuk~\cite{grigorch:degrees}.

 In this note, we
prove that no semigroup of polynomial or intermediate growth can be generated by an
invertible and reversible Mealy automaton whose dual generates a level-transitive
group. Even if the problem of deciding the level-transitivity of an
automaton group is still open, there exist some families of
Mealy automata for which the level-transitivity of an element
in the generated semigroup is decidable~\cite{Steinb15}.

\section{Basic notions}\label{sec-basic}

\subsection{Semigroups and groups generated by Mealy automata}
We first recall the formal definition of an automaton. A {\em (finite, deterministic, and complete) automaton} is a
triple
\(
\bigl( Q,\Sigma,\delta = (\delta_i\colon Q\rightarrow Q )_{i\in \Sigma} \bigr)
\),
where the \emph{state set}~$Q$
and the \emph{alphabet}~$\Sigma$ are non-empty finite sets, and
the~\(\delta_i\) are functions.

A \emph{Mealy automaton} is a quadruple
\(( Q, \Sigma, \delta
,\rho
)\), 
such that \((Q,\Sigma,\delta)\) and~\((\Sigma,Q,\rho)\) are both
automata.
In other terms, a Mealy automaton is a complete, deterministic,
letter-to-letter transducer with the same input and output
alphabet. Its \emph{size} is the cardinality of its state set.

The graphical representation of a Mealy automaton is
standard, see Figure~\ref{fig-exLaurent}.

\begin{figure}[h]
\centering
\begin{tikzpicture}[->,>=latex,node distance=1.8cm]
\tikzstyle{every state}=[minimum size=12pt,inner sep=0pt]
\node[state] (1) {\(x\)};
\node[state] (4) [right of=1] {\(t\)};
\node[state] (3) [below of=1] {\(z\)};
\node[state] (2) [below of=4] {\(y\)};
\path (1) edge node[left] {\(1|1\)} (3)
      (3) edge node[below]{\(1|0\)} (2)
      (2) edge node[right]{\(1|1\)} (4)
      (4) edge node[above]{\(1|0\)} (1)
      (1) edge [loop left] node{\(0|0\)} (1)
      (2) edge [loop right] node{\(0|0\)} (2)
      (3) edge [loop left] node{\(0|1\)} (3)
      (4) edge [loop right] node{\(0|1\)} (4);
\end{tikzpicture}
\caption{An example of a Mealy automaton which does not generate a
  free semigroup on its state set, but whose
dual generates a level-transitive group.}\label{fig-exLaurent}
\end{figure}
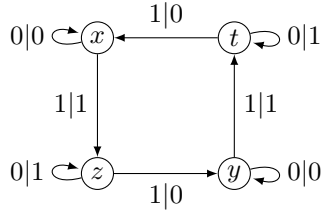

Let~\(\aut{A} = (Q,\Sigma, \delta,\rho)\) be a Mealy automaton.
For each state \(x\in Q\), the map \(\rho_x:\Sigma\to\Sigma\) can be
extended to a map \(\rho_x:\Sigma^*\to\Sigma^*\) recursively defined
by:
\begin{equation*}
\forall i \in \Sigma, \ \forall \mot{s} \in \Sigma^*, \qquad
\rho_x(i\mot{s}) = \rho_x(i)\rho_{\delta_i(x)}(\mot{s}) \:.
\end{equation*}

The image of the empty word is itself.
The mapping~\(\rho_x\) for each $x\in Q$ is length-preserving and prefix-preserving.
We say that~\(\rho_x\) is the function \emph{induced} by \(x\).
For~$\mot{x}=x_1\cdots x_n \in Q^n$ with~$n>0$, set
\(\rho_\mot{x}\colon\Sigma^* \rightarrow \Sigma^*, \rho_\mot{x} = \rho_{x_n}
\circ \cdots \circ \rho_{x_1} \:\).

The semigroup of mappings from~$\Sigma^*$ to~$\Sigma^*$ generated by
$\{\rho_x, x\in Q\}$ is called the \emph{semigroup generated
  by~$\aut{A}$} and is denoted by~$\presm{\aut{A}}$.

\medskip

A Mealy automaton \(\aut{A}=(Q,\Sigma,\delta, \rho)\) is
\emph{invertible\/} if the functions \(\rho_x\) are permutations
of~\(\Sigma\). In this case, the functions induced by the states are
permutations on words of the same length and thus we may consider
the group of mappings from~$\Sigma^*$ to~$\Sigma^*$ generated by
$\{\rho_x, x\in Q\}$: it is called the \emph{group generated
  by~$\aut{A}$} and is denoted by~$\pres{\aut{A}}$.

\medskip

In a Mealy automaton~\(\aut{A}=(Q,\Sigma, \delta, \rho)\), the sets~\(Q\)
and~\(\Sigma\) play dual roles. So we may consider the \emph{dual (Mealy)
automaton} defined by
\(
\dual{\aut{A}} = (\Sigma,Q, \rho, \delta)
\).
We extend to \(\delta\) the former notations on
\(\rho\), in a natural way. Hence \(\delta_i\colon Q^*\rightarrow Q^*,
i\in \Sigma\), are the functions induced by the states
of~$\dual{\aut{A}}$, and for~$\mot{s}=s_1\cdots s_n
\in \Sigma^n$ with~$n>0$, we set~\(\delta_\mot{s}\colon Q^* \rightarrow Q^*,
\ \delta_\mot{s} = \delta_{s_n}\circ \cdots \circ \delta_{s_1}\).

\smallskip

A Mealy automaton \((Q,\Sigma,\delta, \rho)\) is
\emph{reversible\/} if its dual is invertible, that is if
 the functions \(\delta_i\) are permutations of~\(Q\). Note that a
 connected component of a reversible Mealy automaton is always
 strongly connected.

\medskip

An automaton group or semigroup can be seen as acting on a regular
rooted tree representing the language of all words on its alphabet.


\subsection{Growth of a semigroup or of a group}
Let \(H\) be a semigroup generated by a finite set \(S\). The
\emph{length} of an element \(g\) of the 
semigroup, denoted by \(|g|\), is the length of its shortest decomposition: \[|g| = 
\min\{ n\mid \exists s_1,\dots,s_n\in S,\, g=s_1\cdots
s_n\}\:.\]

The \emph{growth function} \(\gamma_H^S\) of the semigroup \(H\) with
respect to the generating set~\(S\) enumerates the elements of~\(H\)
with respect to their length:
\[\gamma_H^S(n) = \#\{g\in H\,;\, |g|\leq n\}\:.\]

\smallskip

The \emph{growth functions} of a group are defined similarly by taking
symmetrical generating sets.

\medskip

The growth functions corresponding to two generating sets are
equivalent~\cite{Mann}, so we may define the \emph{growth} of a group
or a semigroup as the equivalence class of its growth
functions. Hence, for example, a finite (semi)group has a bounded 
growth, while an infinite abelian (semi)group has a polynomial growth,
and a non-abelian free (semi)group has an exponential growth.

\bigskip

It is quite easy to obtain groups of polynomial or exponential
growth. Answering a question of Milnor~\cite{milnor}, Grigorchuk gave
the very first example of an automaton group of intermediate
growth~\cite{grigorch:degrees}: faster than any polynomial, slower than
any exponential, opening thus a new classification criterium for
groups, that has been deeply studied since this seminal
article (see~\cite{GP06} and references therein). This example is an
automaton group. It is now
known as \emph{the Grigorchuk group}.

\subsection{Level-Transitivity}
The action of a (semi)group generated by an invertible Mealy automaton
\(\aut{A}=(Q,\Sigma,\delta,\rho)\) is \emph{level-transitive} if its
restriction to \(\Sigma^n\) has a unique orbit, for any~\(n\) (this
notion is equivalently called \emph{spherically transitive}~\cite{gns}). From a
dual point of view it means that the powers of the dual \(\dual{\aut{A}}\) are
connected, the \emph{\(n\)-th power} of the automaton
\(\dual{\aut{A}}\) being the Mealy automaton
\begin{equation*}
\dual{\aut{A}}^n = \bigl( \ \Sigma^n,Q, (\rho_x\colon \Sigma^n \rightarrow
\Sigma^n)_{x\in Q}, (\delta_{\mot{s}}\colon Q \rightarrow Q
)_{\mot{s}\in \Sigma^n} \ \bigr)\enspace.
\end{equation*}

Note that all the powers of a reversible Mealy automaton are reversible.

The next theorem is proved in~\cite{Kli13}:
\begin{theorem}\label{thm-kli13}
Let \(\aut{A}\) be a reversible automaton with a prime number of
states. If the action of \(\dual{\aut{A}}\) is level-transitive, then
the semigroup \(\presm{\aut{A}}\) is free on the automaton state set.
\end{theorem}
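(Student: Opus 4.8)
The plan is to prove the contrapositive, or rather to argue directly about the structure of the semigroup $\presm{\aut{A}}$ when level-transitivity holds. The main idea I would pursue is that level-transitivity of $\dual{\aut{A}}$ forces a strong rigidity on relations in $\presm{\aut{A}}$. Suppose for contradiction that the semigroup is \emph{not} free on the state set; then there exist two distinct words $\mot{u}, \mot{v} \in Q^*$ inducing the same map $\rho_\mot{u} = \rho_\mot{v}$ on $\Sigma^*$. I would choose such a relation of minimal total length, and analyze how the dual dynamics act on this pair of words.

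The key observation is that applying the dual transitions $\delta_\mot{s}$ for $\mot{s} \in \Sigma^n$ sends one valid relation to another: since $\rho_\mot{u} = \rho_\mot{v}$, reading any input word $\mot{s}$ through the corresponding cross-sections produces matched outputs and matched state-sequence images, so $\delta_\mot{s}(\mot{u})$ and $\delta_\mot{s}(\mot{v})$ again induce equal maps. Because the automaton has a \emph{prime} number of states, I would exploit the dual action on $Q$ (which is by permutations, since $\aut{A}$ is reversible): the cyclic structure of these permutations and the primality of $|Q|$ restrict the possible orbits. Level-transitivity means the powers $\dual{\aut{A}}^n$ are connected, so the permutations $\delta_\mot{s}$ acting on $Q^n$ (or on the relevant coordinates) must mix the letters of $\mot{u}$ and $\mot{v}$ thoroughly; I would combine this mixing with the minimal-length assumption to derive that $\mot{u}$ and $\mot{v}$ must in fact coincide, a contradiction.

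The technical heart, and the step I expect to be the main obstacle, is controlling how the lengths and first letters of $\mot{u}$ and $\mot{v}$ behave under the dual action and extracting a genuine contradiction from primality. One must show that a nontrivial relation cannot survive being "transported" by all the $\delta_\mot{s}$ simultaneously while the group $\pres{\dual{\aut{A}}}$ acts transitively on every level. I anticipate needing a careful bookkeeping argument: tracking the pair $(\delta_\mot{s}(\mot{u}), \delta_\mot{s}(\mot{v}))$ as $\mot{s}$ ranges over $\Sigma^*$, using connectedness of all powers to guarantee that some $\mot{s}$ produces a relation whose structure violates minimality unless the original relation was trivial. The primality hypothesis is what prevents intermediate invariant blocks from obstructing this mixing, and making that rigorous is where the real work lies.

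Since this is the cited Theorem~\ref{thm-kli13} from~\cite{Kli13} rather than a result to be proved in this note, I would in practice simply invoke it; the sketch above reflects how I would reconstruct its proof if needed, and it is precisely this freeness conclusion that the present paper will upgrade to a statement about exponential growth of $\presm{\aut{A}}$.
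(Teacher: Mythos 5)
The first thing to say is that this paper contains no proof of Theorem~\ref{thm-kli13}: it is imported wholesale from~\cite{Kli13} (``The next theorem is proved in~\cite{Kli13}''), so the closing sentence of your proposal --- invoke the reference --- is exactly what the paper does, and is the only part of your text that can be matched against it. Judged as a reconstruction of the proof, your sketch has a genuine gap, and you concede it yourself: the entire content of the theorem sits in the step you defer (``extracting a genuine contradiction from primality \dots is where the real work lies''). What you actually establish is only the easy closure property, that \(\rho_{\mot{u}}=\rho_{\mot{v}}\) implies \(\rho_{\delta_{\mot{s}}(\mot{u})}=\rho_{\delta_{\mot{s}}(\mot{v})}\) --- this is Remark~\ref{rem-rel}. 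The transport argument you build on it has no visible termination mechanism: \(\delta_{\mot{s}}\) preserves the lengths of \(\mot{u}\) and \(\mot{v}\) and acts position by position (it permutes the letter sitting at each position but never permutes positions), and the set of equal-length relations at a given level is closed under this action, so nothing forces a ``minimal'' relation to become shorter or structurally simpler. Your appeal to ``the cyclic structure of these permutations and the primality of \(|Q|\)'' is not anchored to anything: the permutations relevant to a relation between words of length \(n\) act on \(Q^n\), whose cardinality \(p^n\) is not prime for \(n\geq 2\). Finally, freeness on the state set also requires excluding relations between words of \emph{different} lengths, which a same-length bookkeeping of pairs \((\delta_{\mot{s}}(\mot{u}),\delta_{\mot{s}}(\mot{v}))\) never addresses.

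For what it is worth, the argument in~\cite{Kli13} is not a minimal-relation argument but a counting argument on Nerode classes --- precisely the machinery this paper recalls and then reuses for Lemma~\ref{lem-rec1}. Level-transitivity of \(\dual{\aut{A}}\) makes every power \(\aut{A}^n\) connected, hence strongly connected since reversible; Remark~\ref{rem-samesize} then forces all Nerode classes of \(Q^n\) to share a common cardinality \(c_n\), which must divide \(p^n\), and primality is what turns this divisibility constraint, together with the relation between classes of consecutive powers, into the conclusion \(c_n=1\) for all \(n\), i.e.\ freeness. If you want a self-contained proof rather than the citation, that is the argument to reconstruct; as written, your proposal identifies the correct invariant (relations are stable under the dual action) but stops exactly where the theorem begins.
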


In~\cite{Kli13} the hypothesis of the prime number of states was
erroneously conjectured to be not mandatory. In fact, the Mealy
automaton of Figure~\ref{fig-exLaurent} given by Laurent Bartholdi
(personal communication) does not generate a free semigroup on its
state set, even though its dual generates a level-transitive group.

Although deciding the level transitivity of an automaton group or of
an element of an automaton group are open
problems~\cite[Problems 7.2.1(e+f)]{gns}, this former problem has
received a solution in some cases~\cite{Steinb15} and it is even implemented in
the \GAP packages \FR and \SK~\cite{GAP,FR,SK}.

\bigskip

Note that there exists a previous result linking the
level-transitivity of a group and the freeness of an automaton group
on its sate set. In~\cite{square}, Glasner and Mozes associate to a
Mealy automaton a special graph called \emph{\(VH\)-square
  complex}, based on a natural tiling set. They prove that if the
action of the fundamental group of this graph acts level-transitively,
then the group generated by the Mealy automaton is free.

\subsection{Minimization and Nerode classes}
Let $\aut{A}=(Q,\Sigma,\delta,\rho)$ be a Mealy automaton.

The \emph{Nerode equivalence \(\equiv\) on \(Q\)} is the limit of the
sequence of increasingly finer equivalences~$(\equiv_k)$ recursively
defined by:
\begin{align*}
\forall x,y\in Q,\qquad\qquad x\equiv_0 y & \ \Longleftrightarrow
\ \rho_x=\rho_y\:,\\
\forall k\geqslant 0,\ x\equiv_{k+1} y &
\ \Longleftrightarrow\  \bigl(x\equiv_k y\quad \wedge\quad\forall
i\in\Sigma,\ \delta_i(x)\equiv_k\delta_i(y)\bigr)\:.
\end{align*}

Since the set $Q$ is finite, this sequence is ultimately constant.
For every element~$x$ in~$Q$, we denote by~$[x]$  the
class of~$x$ \wrt the Nerode equivalence, called 
the \emph{Nerode class\/}  of
\(x\). Extending to the \(n\)-th power of \(\aut{A}\), we denote 
by \([\mot{x}]\) the Nerode class in \(Q^n\) of
\(\mot{x}\in Q^n\).

Two states of a Mealy automaton belong to the
same Nerode class if and only if they represent
the same element in the generated semigroup, \ie if and only
if they induce the same action on \(\Sigma^*\).

\medskip

The \emph{minimization} of $\aut{A}$ is the Mealy automaton
\(\mz(\aut{A})=(Q/\mathord{\equiv},\Sigma,\tilde{\delta},\tilde{\rho})\),
where for every $(x,i)$ in $Q\times \Sigma$,
$\tilde{\delta}_i([x])=[\delta_i(x)]$ and
$\tilde{\rho}_{[x]}=\rho_x$.
This definition is consistent with the standard minimization of
``deterministic finite automata'' where instead of
considering the mappings $(\rho_x:\Sigma\to\Sigma)_x$, the computation
is initiated by the separation between terminal and non-terminal
states.

\medskip

The following remarks will be useful for the rest of the paper:

\begin{remark}\label{rem-rel}
If two words of \(Q^*\) are equivalent, so are their images under the
action of any element of~\(\presm{\dual{\aut{A}}}\).
\end{remark}

\begin{remark}\label{rem-samesize}
The Nerode classes of a connected reversible Mealy automaton (\ie a
Mealy automaton with exactly one connected component) have the
same cardinality.
\end{remark}

\begin{remark}\label{rem-finiteness}
It is known from~\cite{klimann_ps:3state} that a reversible automaton
generates a finite semigroup if and only if the sizes of the connected components
of its powers are uniformely bounded. It is straightforward to adapt the proof to
show that a reversible automaton generates a finite semigroup if and
only if the sizes of the minimizations of the connected components of its powers
are uniformely bounded.
\end{remark}

\begin{remark}\label{rem-sameaction}
Let \(\aut{A}\) and \(\aut{B}\) be two reversible connected Mealy automata on
the same alphabet \(\Sigma\), \(x\) some state of \(\aut{A}\), and \(y\) some
state of \(\aut{B}\). If \(x\) and \(y\) have the same action on
\(\Sigma^*\), then \(\mz(\aut{A})\) and \(\mz(\aut{B})\) are
isomorphic; in particular they have the same size. Indeed the image of
\(x\) in \(\aut{A}\) by some word \(\mot{s}\in\Sigma^*\) and the image
of \(y\) in \(\aut{B}\) by this same word \(\mot{s}\) have necessarily
the same action on \(\Sigma^*\), and \(\aut{A}\) and \(\aut{B}\) being strongly
connected (because they are connected and reversible), for every state
of \(\aut{A}\) there is a state of \(\aut{B}\) which acts similarly on
\(\Sigma^*\), and vice-versa.
\end{remark}

\section{Main result}
As already said, Theorem~\ref{thm-kli13} cannot be generalized to any
number of states. Our attempt here is to understand which property,
weaker than freeness, can be deduced from the level-transitivity of 
the group generated by the dual automaton, without any hypothesis on
the size of the state set. To obtain the following result, we need to
add some invertibility hypothesis:

\begin{theorem}\label{thm-main}
Let \(\aut{A}\) be an invertible and reversible Mealy automaton. If the action of
\(\dual{\aut{A}}\) is level-transitive, then the semigroup
\(\presm{\aut{A}}\) has exponential growth.
\end{theorem}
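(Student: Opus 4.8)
The plan is to work dually and count, rather than to build explicit free subsemigroups. By Remark~\ref{rem-finiteness}, the semigroup $\presm{\aut{A}}$ is infinite, and more quantitatively, controlling its growth amounts to controlling the sizes of the minimizations $\mz(\dual{\aut{A}}^n)$ of the connected components of the powers. The key dictionary is that the number of distinct elements of $\presm{\aut{A}}$ of length exactly $n$ equals the number of distinct functions on $\Sigma^*$ induced by words $\mot{x}\in Q^n$, which by the Nerode correspondence is the number of Nerode classes $[\mot{x}]$ in $Q^n$, \ie the size of the state set of the minimized $n$-th power $\mz(\dual{\aut{A}}^n)$. So I would first reduce the theorem to the single estimate
\begin{equation*}
\gamma_{\presm{\aut{A}}}(n) \ \geq\ \#\bigl(Q^n/\mathord{\equiv}\bigr)\ =\ \#\ \text{states of}\ \mz(\dual{\aut{A}}^n)\,,
\end{equation*}
and then show this quantity grows exponentially in $n$.

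Next I would exploit level-transitivity. Since the action of $\dual{\aut{A}}$ is level-transitive, each power $\dual{\aut{A}}^n$ is connected, hence (being reversible, as all powers of a reversible automaton are reversible) strongly connected. By Remark~\ref{rem-samesize} the Nerode classes of $\dual{\aut{A}}^n$ all have the same cardinality; call it $c_n$. Then $\#(Q^n/\mathord{\equiv}) = |Q|^n / c_n$, so exponential growth of the number of Nerode classes is equivalent to showing that $c_n$ does \emph{not} grow as fast as $|Q|^n$ --- concretely, that $c_n \leq C^n$ for some constant $C < |Q|$, or at least that $|Q|^n/c_n$ is bounded below by an exponential. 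This is where I expect the heart of the argument to lie.

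To bound $c_n$ I would use the invertibility hypothesis on $\aut{A}$ together with reversibility of the dual, so that $\dual{\aut{A}}$ is \emph{bireversible} (invertible and reversible, with invertible dual). The point of adding invertibility of $\aut{A}$ --- which amounts to invertibility of the \emph{dual} of $\dual{\aut{A}}$ --- is that two words in the same Nerode class of $\dual{\aut{A}}^n$ must have related coordinates in a controlled way: I would try to show that the size $c_n$ of a Nerode class is itself computed by the dynamics of the minimized automaton on one smaller scale, giving a submultiplicative or recursive relation such as $c_{m+n} \leq c_m\, c_n$ together with the trivial bound $c_1 = \#[x]$ for a single letter. Combined with strong connectedness and the fact that a bireversible automaton whose Nerode classes were to eventually fill up all of $Q^n$ would force the generated semigroup to be finite (contradicting Remark~\ref{rem-finiteness} via level-transitivity, which makes the tree action faithful and infinite), this should pin $c_n$ below a proper exponential and hence make $|Q|^n/c_n$ grow exponentially.

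The main obstacle, as flagged above, is the quantitative control of the Nerode class sizes $c_n$: knowing merely that $\presm{\aut{A}}$ is infinite gives that $|Q|^n/c_n \to \infty$, but \emph{exponential} growth requires ruling out subexponential behaviour, \ie it requires an honest multiplicative gap $c_n \leq C^n$ with $C<|Q|$ uniformly in $n$, not just $c_n = o(|Q|^n)$. I expect the decisive input to be a pumping/pigeonhole argument on the strongly connected reversible powers: because each $\dual{\aut{A}}^n$ is strongly connected with $|Q|$ states, and invertibility of $\aut{A}$ forces the action of distinct short words on sufficiently deep levels of the tree to separate, one can extract, from any long word, a bounded-size ``reduced'' witness distinguishing its Nerode class, and count these witnesses to produce the exponential lower bound. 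Making this separation uniform across all levels $n$ --- so that the constant $C$ does not degrade --- is the technical crux of the proof.
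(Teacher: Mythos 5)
Your setup matches the paper's: you reduce the theorem to showing that the number of Nerode classes of \(Q^n\) grows exponentially, and you correctly observe that \(\gamma_{\presm{\aut{A}}}(n)\geq \#(Q^n/\mathord{\equiv}) = \#Q^n/c_n\), where \(c_n\) is the common class size. (One notational slip: these classes live in the powers \(\aut{A}^n\), not \(\dual{\aut{A}}^n\) whose state set is \(\Sigma^n\); level-transitivity of \(\dual{\aut{A}}\) is exactly what makes the powers \(\aut{A}^n\) connected, hence strongly connected, hence with equal-sized classes by Remark~\ref{rem-samesize}.) But the proposal stops exactly where the proof has to start. You state that one needs a uniform multiplicative gap \(c_n\leq C^n\) with \(C<\#Q\), offer two candidate mechanisms (submultiplicativity \(c_{m+n}\leq c_m c_n\), and a ``pumping/pigeonhole'' extraction of bounded witnesses), carry out neither, and yourself flag the uniformity as the unresolved ``technical crux.'' Submultiplicativity is not clear: \(\rho_{\mot{u}'\mot{v}'}=\rho_{\mot{u}\mot{v}}\) does not force \(\mot{u}'\equiv\mot{u}\) and \(\mot{v}'\equiv\mot{v}\), so a Nerode class of \(Q^{m+n}\) need not sit inside a product of classes; and even granting it, you would still need \(c_1<\#Q\), which also requires an argument.

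The paper closes this gap with a different, level-by-level mechanism. Lemma~\ref{lem-rec1} shows that each Nerode class \(D\) of \(Q^{n+1}\) decomposes as \(\bigcup_{q\in Q_D}C_{i_{q,D}}q\) with the \(C_{i_{q,D}}\) pairwise distinct classes of \(Q^n\), so that \(c_{n+1}/c_n=\#Q_D\) and the ratio \(\#\mz(\aut{A}^{n+1})/\#\mz(\aut{A}^n)\) is an integer, equal to \(1\) exactly when \(Q_D=Q\). The Proposition then shows this situation propagates: if the minimization sizes agree at levels \(n\) and \(n+1\), they agree at level \(n+2\), hence forever, which by Remark~\ref{rem-finiteness} would make \(\presm{\aut{A}}\) finite, contradicting level-transitivity. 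Therefore the integer ratio is at least \(2\) at every level, giving \(\#\mz(\aut{A}^n)\geq 2^n\) directly --- no uniform separation constant has to be extracted by pumping. To repair your proposal you would need to supply precisely this pair of statements (integrality of the ratio and its propagation), or an equivalent uniform bound; as written, your argument is a plan rather than a proof.
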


Note that the exponential growth of the semigroup \(\presm{\aut{A}}\)
implies the exponential growth of the group \(\pres{\aut{A}}\).

\bigskip

Let us first look at the structure of the Nerode classes of two consecutive
powers of the state set, in the case of an invertible and reversible Mealy automaton
whose dual generates a level-transitive group.

\begin{lemma}\label{lem-rec1}
Let \(\aut{A}=(Q,\Sigma,\delta,\rho)\) be an invertible and reversible automaton whose
dual generates a level-transitive group.

Let \((C_i)_{1\leq i\leq k}\) be the Nerode classes of \(Q^n\)
  for some \(n\), and \(D\) be a Nerode class of \(Q^{n+1}\). We have
\[D = \bigcup_{q\in Q_D} C_{i_{q,D}} q\quad\text{and}\quad D = \bigcup_{q\in Q'_D} qC_{i'_{q,D}}\:,\]
where \(Q_D\subseteq Q\) and \(Q'_D\subseteq Q\) have the same cardinality, and the
\((i_{q,D})_{q\in Q_D}\) on the one hand and the \((i'_{q,D})_{q\in Q_D}\) on the
other are pairewise distinct.

The automata \(\mz(\aut{A}^n)\) and \(\mz(\aut{A}^{n+1})\) have the
same size if and only if \(Q_D=Q'_D=Q\).
\end{lemma}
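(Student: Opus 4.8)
The plan is to analyze how a single Nerode class $D\subseteq Q^{n+1}$ decomposes, by reading words in $Q^{n+1}$ either as (word of length $n$)$\cdot$(last letter) or as (first letter)$\cdot$(word of length $n$). Consider the first decomposition. For each $q\in Q$ that appears as a last letter of some word in $D$, collect the set of length-$n$ prefixes whose concatenation with $q$ lands in $D$; I would first show this prefix set is a \emph{union} of Nerode classes $C_i$ of $Q^n$, and in fact a \emph{single} class. The key tool is Remark~\ref{rem-rel}: if $\mot{x}q$ and $\mot{y}q$ lie in the same class $D$, then applying any $\delta_\mot{s}$ (an element of $\presm{\dual{\aut{A}}}$ acting on $Q^{n+1}$) preserves equivalence, and because the automaton is reversible the action on the last coordinate $q$ and on the prefix can be tracked; combined with the standard fact that $\mot{x}q\equiv \mot{y}q$ forces $\mot{x}\equiv\mot{y}$ (equivalence of longer words refines equivalence of prefixes), each fibre over a fixed $q$ is exactly one class $C_{i_{q,D}}$. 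This gives $D=\bigcup_{q\in Q_D}C_{i_{q,D}}q$, and symmetrically the left decomposition $D=\bigcup_{q\in Q'_D}qC_{i'_{q,D}}$.

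Next I would establish that the indices are pairwise distinct: if $C_{i_{q,D}}q$ and $C_{i_{q',D}}q$ used the \emph{same} class $C_i$ for two distinct last letters $q\neq q'$ we would not get a contradiction, but distinctness is asserted over the index $i$ for fixed structure — so the claim is rather that the pairs, hence the classes appearing, are distinct as listed. The honest statement to prove is that the decomposition into the pieces $C_{i_{q,D}}q$ is by distinct right-cosets (distinct $q$), automatically disjoint, and the $i_{q,D}$ are well-defined functions of $q$; the pairwise-distinctness of the $(i_{q,D})_q$ then follows because two different last letters with the same prefix-class would, under level-transitivity of the dual, be forced into separate classes by the reversibility argument. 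I would lean on Remark~\ref{rem-samesize} here: all Nerode classes of a connected reversible power have equal cardinality, so counting cardinalities of both decompositions of $D$ immediately yields $|Q_D|=|Q'_D|$, since $\#D = |Q_D|\cdot\#C = |Q'_D|\cdot\#C$ with $\#C$ the common class size in $Q^n$.

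For the final equivalence, I would count Nerode classes. The size of $\mz(\aut{A}^n)$ is the number $k$ of Nerode classes of $Q^n$, and likewise for $n+1$. Summing the right-decomposition over all classes $D$ of $Q^{n+1}$ partitions $Q^{n+1}=Q^n\cdot Q$, and each class $C_i$ of $Q^n$ paired with each $q\in Q$ contributes the coset $C_iq$, which lands in exactly one class $D$; thus the total count relates the number of classes in $Q^{n+1}$ to $k$ and to the average size of the $Q_D$. The two automata have equal size precisely when this pairing is a bijection at the level of classes, i.e.\ when every $C_iq$ is itself a full class, which forces each $D$ to be a single coset $C_iq$, giving $|Q_D|=1$; but the cardinality identity $\#D=|Q_D|\cdot\#C$ together with Remark~\ref{rem-samesize} (equal class sizes at both levels) forces instead that the classes merge maximally exactly when $Q_D=Q'_D=Q$ for every $D$. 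The cleanest route is via Remark~\ref{rem-finiteness}-style counting: equal minimization sizes $\iff$ the map $Q^n/\mathord\equiv \;\to\; Q^{n+1}/\mathord\equiv$ induced by appending a letter is size-preserving $\iff$ each $D$ collects a full row $q\in Q$, i.e.\ $Q_D=Q$ (and dually $Q'_D=Q$).

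The main obstacle I anticipate is the distinctness-of-indices step and pinning down exactly why each fibre is a \emph{single} class $C_{i_{q,D}}$ rather than a union of several: this is where level-transitivity of the dual must be used in an essential way, presumably to guarantee that the dual action on $Q^{n+1}$ is transitive enough that the fibre structure is uniform across all classes $D$, so that no fibre can split. I expect the reversibility (hence strong connectedness of each power, as noted after the definition) together with Remark~\ref{rem-rel} to do the heavy lifting, but making the single-class conclusion airtight—rather than merely union-of-classes—will require carefully tracking how $\delta_\mot{s}$ permutes the last coordinate while acting on prefixes.
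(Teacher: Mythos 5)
Your overall architecture matches the paper's: decompose \(D\) by last (\resp first) letter, show each fibre is a single Nerode class of \(Q^n\), then count using the equal class sizes of Remark~\ref{rem-samesize} to get \(\#Q_D=\#Q'_D\) and the final equivalence via the ratio \(\#Q_D=c_{n+1}/c_n\). But the central step --- that the set of prefixes \(\mot{u}\) with \(\mot{u}q\in D\) is a \emph{single} class \(C_{i_{q,D}}\) rather than a union of several --- is exactly the step you leave open, and you are looking for it in the wrong place. It has nothing to do with level-transitivity, with Remark~\ref{rem-rel}, or with tracking how \(\delta_{\mot{s}}\) permutes coordinates. It is a one-line cancellation using the \emph{invertibility of \(\aut{A}\)}, the hypothesis this paper adds to Theorem~\ref{thm-kli13}: since \(\rho_{\mot{u}q}=\rho_q\circ\rho_{\mot{u}}\), if \(C_iq\subseteq D\) and \(C_jq\subseteq D\) then \(\rho_q\circ\rho_{\mot{u}}=\rho_q\circ\rho_{\mot{v}}\) for \(\mot{u}\in C_i\), \(\mot{v}\in C_j\), and \(\rho_q\) is a bijection of \(\Sigma^*\), so \(\rho_{\mot{u}}=\rho_{\mot{v}}\) and \(C_i=C_j\). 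Dually, \(\rho_{q\mot{u}}=\rho_{\mot{u}}\circ\rho_q\) and one cancels \(\rho_q\) on the right. The ``standard fact'' you invoke (that \(\mot{x}q\equiv\mot{y}q\) forces \(\mot{x}\equiv\mot{y}\)) is precisely this statement; it is not standard and is false for non-invertible automata (take \(\rho_q\) non-injective), so citing it while simultaneously declaring the single-class conclusion your main obstacle is circular. Level-transitivity enters only to guarantee that \(\aut{A}^n\) and \(\aut{A}^{n+1}\) are connected so that Remark~\ref{rem-samesize} applies --- which you do use correctly in the counting step.

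Two smaller issues: your discussion of the pairwise distinctness of the \((i_{q,D})_q\) does not actually prove anything (the relevant disjointness for the cardinality count \(\#D=\#Q_D\cdot\#C\) comes for free from the distinct last letters, not from distinctness of the indices); and in the final equivalence you briefly derive the backwards condition \(|Q_D|=1\) before self-correcting --- the clean statement is \(\#\mz(\aut{A}^{n+1})=(\#Q/\#Q_D)\cdot\#\mz(\aut{A}^n)\), so equality of sizes holds iff \(\#Q_D=\#Q\), \ie \(Q_D=Q\).
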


\begin{proof}
We prove the first decomposition.

Let \(\mot{u}\in Q^n\) and \(\mot{v}\in [\mot{u}]\), then
\(\mot{u}q\) and \(\mot{v}q\) are in the same
Nerode class of \(Q^{n+1}\). So \(\mot{u}q\in D\) implies
\([\mot{u}]q\subseteq D\).

Let \(C_i\) and \(C_j\) be two different Nerode classes of \(Q^n\), then
it is impossible that \(C_iq\subseteq D\) and \(C_jq\subseteq D\)
because by hypothesis \(q\) induces a bijection on \(\Sigma^*\).

As a consequence, the ratio between the cardinality of a Nerode
class of \(Q^{n+1}\) and the cardinality of a Nerode class of \(Q^n\) (which does
not depend on these classes by Remark~\ref{rem-samesize}) is the
integer \(\#Q_D=\#Q'_D\) which is less than or equal to \(\#Q\). It is
equal to \(\#Q\) if and only if \(\mz(\aut{A}^n)\) and
\(\mz(\aut{A}^{n+1})\) have the same size.
\qed\end{proof}

\begin{proposition}
Let \(\aut{A}\) be an invertible and reversible Mealy automaton whose dual generates a
level-transitive group, and \(n\) be an integer.

If \(\#\mz(\aut{A}^{n+1}) = \#\mz(\aut{A}^n)\), then
\(\#\mz(\aut{A}^{n+2}) = \#\mz(\aut{A}^{n+1})\).
\qed\end{proposition}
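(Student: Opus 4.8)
The plan is to show that the "stabilization" of the minimization sizes propagates forward: once the number of Nerode classes stops growing from level $n$ to level $n+1$, it cannot grow again from $n+1$ to $n+2$. By Lemma~\ref{lem-rec1}, equality $\#\mz(\aut{A}^{n+1})=\#\mz(\aut{A}^n)$ is equivalent to $Q_D=Q'_D=Q$ for every Nerode class $D$ of $Q^{n+1}$; equivalently, the passage from level $n$ to level $n+1$ is ``full'', meaning every Nerode class $D$ of $Q^{n+1}$ decomposes as $D=\bigcup_{q\in Q}C_{i_{q,D}}\,q$ where the indices range over all of $Q$ on both sides. So the hypothesis tells us that each level-$(n+1)$ class is obtained by appending every letter of $Q$ to some level-$n$ class. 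The goal is to deduce the same fullness at the next step.

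First I would record the combinatorial reformulation: by Remark~\ref{rem-samesize} all Nerode classes of $Q^{m}$ have a common cardinality $c_m$, and Lemma~\ref{lem-rec1} gives $c_{n+1}=\#Q_D\cdot c_n$ with $\#Q_D\le\#Q$, so $\#\mz(\aut{A}^{n+1})=\#\mz(\aut{A}^n)$ exactly when $\#Q_D=\#Q$, i.e. $c_{n+1}=\#Q\cdot c_n$. Since $\#Q^{m}=(\#Q)^m$ and $c_m\cdot\#\mz(\aut{A}^m)=(\#Q)^m$, the equality of consecutive minimization sizes is the same as $c_{n+1}/c_n=\#Q$. The proposition is therefore the statement that the ratio $c_{m+1}/c_m$, once it reaches its maximal value $\#Q$, stays there. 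I would phrase the whole argument in terms of these ratios.

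The engine of the proof is Remark~\ref{rem-rel}: Nerode equivalence on $Q^{*}$ is preserved by the dual action, together with level-transitivity of $\dual{\aut{A}}$, which says the dual action is transitive on each $\Sigma^m$. The key step is to use these to show that fullness at one level forces fullness at the next. Concretely, suppose $Q_D=Q$ for all classes $D$ at level $n+1$. Take a class $E$ of $Q^{n+2}$ and a letter $q\in Q$; I want to show the appended class $Eq$-type decomposition uses all of $Q$. I would write an element of $E$ as $\mot{u}q'$ with $\mot{u}\in Q^{n+1}$, use that $[\mot{u}]$ is full from the $n\to n+1$ step, and then transport this fullness through the dual action: applying $\delta_{\mot s}$ for suitable $\mot s\in\Sigma^{*}$ and invoking level-transitivity to reach any desired configuration, while Remark~\ref{rem-rel} guarantees Nerode classes are carried to Nerode classes. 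The point is that level-transitivity lets one move between the ``prefix'' decomposition $D=\bigcup qC$ and the ``suffix'' decomposition $D=\bigcup Cq$ uniformly across all classes, so a deficiency at level $n+2$ would pull back, via the dual action, to a deficiency already at level $n+1$, contradicting the hypothesis.

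The main obstacle I expect is the bookkeeping that makes ``transport through the dual action'' rigorous: one must check that the decomposition indices $i_{q,D}$ (which record \emph{which} level-$n$ class is prepended) behave coherently under $\delta_{\mot s}$, and that the two decompositions (left and right) interact correctly so that surjectivity onto $Q$ is genuinely preserved rather than merely injectivity. Equivalently, one must rule out that the common ratio could \emph{decrease} after having been maximal — the subtlety is that a priori the ratio $c_{m+1}/c_m$ need not be monotone, so the argument cannot simply be ``sizes are nondecreasing''. The cleanest route is probably to argue directly that $c_{n+1}=\#Q\cdot c_n$ implies the map $\mot{u}\mapsto[\mot uq]$ from level-$n$ classes to level-$(n+1)$ classes is, for each fixed $q$, a bijection onto a full fiber system, and then show this bijectivity is inherited one level up because the dual action commutes with the append-$q$ operation in the appropriate sense. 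Once this commutation is set up, level-transitivity supplies the single orbit needed to conclude, and the equality $\#\mz(\aut{A}^{n+2})=\#\mz(\aut{A}^{n+1})$ follows.
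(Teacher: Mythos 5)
Your opening reduction is correct and matches the paper: by Lemma~\ref{lem-rec1} and Remark~\ref{rem-samesize}, all Nerode classes of \(Q^m\) share a cardinality \(c_m\), the ratio \(c_{m+1}/c_m=\#Q_D\) is an integer at most \(\#Q\), and \(\#\mz(\aut{A}^{m+1})=\#\mz(\aut{A}^m)\) exactly when this ratio equals \(\#Q\), i.e.\ when every class is ``full''. You are also right that monotonicity of the ratio is not free and is precisely what must be proved. The gap is in the propagation step, where you never produce an actual argument. The engine you propose --- transporting fullness through the dual action \(\delta_{\mot{s}}\) using level-transitivity and Remark~\ref{rem-rel} --- cannot do the job: \(\delta_{\mot{s}}\) permutes \(Q^{n+2}\) \emph{within} that level and carries Nerode classes onto Nerode classes, so combining it with level-transitivity only re-derives that all classes of \(Q^{n+2}\) have the same cardinality and the same value of \(\#Q_E\), which is already contained in Remark~\ref{rem-samesize} and Lemma~\ref{lem-rec1}; it does not relate level \(n+1\) to level \(n+2\). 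Your concrete sketch has the same position mismatch: writing an element of \(E\) as \(\mot{u}q'\) and expanding \([\mot{u}]=\bigcup_{q\in Q}C_{i_q}q\) gives \([\mot{u}]q'=\bigcup_{q\in Q}C_{i_q}qq'\), a decomposition indexed by the \emph{penultimate} letter, whereas \(Q_E\) is about the \emph{last} letter. You flag the bookkeeping as the main obstacle and then leave it unresolved, so the proof is not complete.

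The missing idea is elementary and uses no dual action at all: prepend a state rather than append one. Fix a Nerode class \(D_1\) of \(Q^{n+1}\) and \(r\in Q\). By hypothesis and Lemma~\ref{lem-rec1}, \(D_1=\bigcup_{q\in Q}C_{i_q}q\) with the union over \emph{all} of \(Q\), hence \(rD_1=\bigcup_{q\in Q}(rC_{i_q})q\). Since \(\rho_{r\mot{u}}=\rho_{\mot{u}}\circ\rho_r\), prepending \(r\) preserves Nerode equivalence, so each \(rC_{i_q}\) lies inside a single class \(D_{j_q}\) of \(Q^{n+1}\) and \(rD_1\) lies inside a single class \(E\) of \(Q^{n+2}\). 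Each \(D_{j_q}q\) also lies inside a single class of \(Q^{n+2}\) and meets \(E\) (it contains \(rC_{i_q}q\)), so \(\bigcup_{q\in Q}D_{j_q}q\subseteq E\). This exhibits \(Q_E=Q\), and the last assertion of Lemma~\ref{lem-rec1} gives \(\#\mz(\aut{A}^{n+2})=\#\mz(\aut{A}^{n+1})\). That regrouping of a prefix multiplication against the known full suffix decomposition is the whole proof, and it is the step your proposal does not supply.
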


\begin{proof}
Let us denote by \(Q\) the state set of \(\aut{A}\), by
\(C_1,\dots,C_k\) the Nerode classes of \(Q^n\), and by 
\(D_1,\dots,D_k\) the Nerode classes of \(Q^{n+1}\) (by hypothesis
\(Q^{n+1}\) has as many Nerode classes as \(Q^n\)).

Let \(r\in Q\): \(rD_1\) is included in some Nerode class of
\(Q^{n+2}\). But by Lemma~\ref{lem-rec1} we have
\[rD_1 = r\bigcup_{q\in Q} C_{i_q}q = \bigcup_{q\in Q} rC_{i_q}q\:,\]
where \(C_{i_q}\) is written for \(C_{i_{q,D_1}}\).
Now, for a fixed \(q\in Q\), \(rC_{i_q}\) is included in some Nerode
class of \(Q^{n+1}\), say \(D_{j_q}\). Hence
\[\bigcup_{q\in Q} D_{j_q}q\] is included in a Nerode class of
\(Q^{n+2}\) and by Lemma~\ref{lem-rec1} we obtain the result.
\qed\end{proof}

We can now prove Theorem~\ref{thm-main}.

\begin{proof}
If \(\dual{\aut{A}}\) is level-transitive, then \(\pres{\dual{\aut{A}}}\) is
infinite and so is \(\presm{\aut{A}}\) (see for example~\cite{AKLMP12,klimann_ps:3state}).
But
if there exists \(n\) such that \(\mz(\aut{A}^n)\) and
\(\mz(\aut{A}^{n+1})\) have the same size, then for any \(m\geq n\),
\(\mz(\aut{A}^m)\) has this same size, and the generated 
semigroup should be finite by Remark~\ref{rem-finiteness}.

So for any \(n\), the ratio of the size of \(\mz(\aut{A}^{n+1})\) and
the size of \(\mz(\aut{A}^{n})\), which is known to be an integer by Lemma~\ref{lem-rec1}, is
at least~2. So the sequence \((\#\mz(\aut{A}^n))_n\) increases and is
componentwise greater than or equal to \((2^n)_n\). By
Remark~\ref{rem-sameaction}, this implies that, for any integer \(n\), the actions induced by
the states of \(\aut{A}^n\) cannot be induced by previous powers of
\(\aut{A}\) and hence \(\presm{\aut{A}}\) has an exponential growth.
\qed\end{proof}

\paragraph{Open problems}
For this result to be fully applicable, it remains of course to find a procedure
to decide if an invertible and reversible Mealy automaton is
level-transitive. Another interesting question would be: can an
invertible and reversible Mealy automaton generate a semigroup of polynomial or
of intermediate growth?

\bibliographystyle{amsplain}
\bibliography{leveltransitive}

\def\cprime{$'$}
\providecommand{\bysame}{\leavevmode\hbox to3em{\hrulefill}\thinspace}
\providecommand{\MR}{\relax\ifhmode\unskip\space\fi MR }
\providecommand{\MRhref}[2]{%
  \href{http://www.ams.org/mathscinet-getitem?mr=#1}{#2}
}
\providecommand{\href}[2]{#2}
\begin{thebibliography}{10}

\bibitem{AKLMP12}
A.~Akhavi, I.~Klimann, S.~Lombardy, J.~Mairesse, and M.~Picantin, \emph{On the
  finiteness problem for automaton (semi)groups}, Int. J. Algebr. Comput.
  \textbf{22} (2012), no.~6, 26p.

\bibitem{FR}
L.~Bartholdi, \emph{\(\mathtt{FR}\) functionally recursive groups, self-similar
  groups}, 2014.

\bibitem{bgs03}
L.~Bartholdi, R.I. Grigorchuk, and Z.~{\v{S}}uni{\'k}, \emph{Handbook of
  algebra}, vol.~3, ch.~Branch groups, pp.~989--1112, Elsevier BV, 2003.

\bibitem{GAP}
The GAP~Group, \emph{{\(\mathtt{GAP}\) Groups, Algorithms, and Programming}},
  2015.

\bibitem{square}
Y.~Glasner and S.~Mozes, \emph{{Automata and Square Complexes}}, Geom. Dedicata
  \textbf{111} (2005), no.~1, 43--64.

\bibitem{grigorch:degrees}
R.~Grigorchuk, \emph{Degrees of growth of finitely generated groups and the
  theory of invariant means}, Izv. Akad. Nauk SSSR Ser. Mat. \textbf{48-5}
  (1984), 939--985.

\bibitem{GP06}
R.~Grigorchuk and I.~Pak, \emph{Groups of intermediate growth: an introduction
  for beginners}, 2006, arXiv:math/0607384.

\bibitem{Grigorchuk2000}
R.~I. Grigorchuk, \emph{New horizons in pro-p groups}, ch.~Just Infinite Branch
  Groups, pp.~121--179, Birkh{\"a}user Boston, Boston, MA, 2000.

\bibitem{gns}
R.I. Grigorchuk, V.V. Nekrashevich, and V.I. Sushchanski{\u\i}, \emph{Automata,
  dynamical systems, and groups}, Tr. Mat. Inst. Steklova \textbf{231} (2000),
  134--214.

\bibitem{Kli13}
I.~Klimann, \emph{Automaton semigroups: The two-state case}, Theor. Comput.
  Syst. (special issue STACS'13) \textbf{58} (2014), no.~4, 664--680.

\bibitem{klimann_ps:3state}
I.~Klimann, M.~Picantin, and D.~Savchuk, \emph{A connected 3-state reversible
  {M}ealy automaton cannot generate an infinite {B}urnside group}, DLT' 15,
  LNCS, vol. 9168, 2015, pp.~313--325.

\bibitem{Mann}
A.~Mann, \emph{How groups grow}, Lecture Note Series, vol. 395, The London
  Mathematical Society, 2012.

\bibitem{milnor}
J.~Milnor, \emph{Advanced problem 5603}, MAA Monthly \textbf{75} (1968),
  685--686.

\bibitem{SK}
Y.~Muntyan and D.~Savchuk, \emph{\(\mathtt{automgrp}\) automata groups}, 2014.

\bibitem{Steinb15}
B.~{Steinberg}, \emph{On some algorithmic properties of finite state
  automorphisms of rooted trees}, Contemp. Math., Algorithmic problems of group
  theory, their complexity, and applications to cryptography, vol. 633, Amer.
  Math. Soc., 2015, pp.~115--123.

\end{thebibliography}

\end{document}